\documentclass[submission,copyright,creativecommons]{eptcs}

\usepackage{amsmath}
\usepackage{amsthm}
\usepackage{amssymb}
\usepackage{mathtools}
\usepackage{nicefrac}

\theoremstyle{plain}
\newtheorem{theorem}{Theorem}
\newtheorem{lemma}{Lemma}
\newtheorem{example}{Example}
\newtheorem{corollary}{Corollary}

\usepackage[utf8]{inputenc}
\usepackage[T1]{fontenc}

\usepackage{tikz}
\usetikzlibrary{automata}
\usetikzlibrary{backgrounds}

\usepackage{booktabs}

\tikzset{
  initial text=,
  every path/.style={->,-stealth, every loop/.style={-stealth}},
  every initial by arrow/.style={-stealth, initial text={},
  every loop/.style={red,-stealth}},
}


\newcommand{\myquot}[1]{``#1''}

\theoremstyle{plain}
\newtheorem{numberedremark}[theorem]{Remark}


\newcommand{\nats}{\mathbb{N}}
\newcommand{\size}[1]{|#1|}

\renewcommand{\epsilon}{\varepsilon}
\renewcommand{\phi}{\varphi}

\newcommand{\set}[1]{\{#1\}}
\newcommand{\pow}[1]{2^{#1}}


\newcommand{\aut}{\mathfrak{A}}
\newcommand{\autb}{\mathfrak{B}}
\newcommand{\strataut}{\mathfrak{T}}

\newcommand{\acc}{\mathrm{Acc}}
\newcommand{\col}{\Omega}
\newcommand{\infi}[0]{\mathrm{Inf}}
\newcommand{\curlyF}[0]{\mathcal{F}}

\newcommand{\parity}{\mathrm{prty}}
\newcommand{\muller}{\mathrm{mllr}}
\newcommand{\initmark}{I}


\newcommand{\delaygame}[1]{\Gamma\!_{f}(#1)}

\newcommand{\blockgame}[2]{\Gamma^{#1}(#2)}

\newcommand{\SigmaI}{\Sigma_I}
\newcommand{\SigmaO}{\Sigma_O}
\newcommand{\strat}{\tau}
\newcommand{\stratO}{\tau_O}
\newcommand{\stratI}{\tau_I}
\newcommand{\p}{P}

\newcommand{\game}{\Gamma}


\newcommand{\bigo}{\mathcal{O}}


\newcommand{\update}{\mathrm{upd}}


\newcommand{\block}[1]{\overline{#1}}

\newcommand{\equivclass}[1]{[#1]_\equiv}
\newcommand{\R}{{R}}

\newcommand{\mon}{\mathfrak{M}}
\newcommand{\Leq}{L_=}

\title{Finite-state Strategies in Delay Games\thanks{Supported by the project \myquot{TriCS} (ZI 1516/1-1) of the German Research Foundation (DFG).}}
\author{Martin Zimmermann
\institute{Reactive Systems Group, Saarland University, 66123 Saarbrücken, Germany}
\email{zimmermann@react.uni-saarland.de}
}

\begin{document}
\maketitle

\begin{abstract}
What is a finite-state strategy in a delay game? We answer this surprisingly non-trivial question and present a very general framework for computing such strategies: they exist for all winning conditions that are recognized by automata with acceptance conditions that satisfy a certain aggregation property. Our framework also yields upper bounds on the complexity of determining the winner of such delay games and upper bounds on the necessary lookahead to win the game. In particular, we cover all previous results of that kind as special cases of our uniform approach. 
\end{abstract}

\section{Introduction}
\label{sec-intro}
What is a finite-state strategy in a delay game? The answer to this question is surprisingly non-trivial due to the nature of delay games in which one player is granted a lookahead on her opponent's moves. This puts her into an advantage when it comes to winning games, i.e., there are games that can only be won with lookahead, but not without. A simple example is a game where one has to predict the third move of the opponent with one's first move. This is impossible when moving in alternation, but possible if one has access to the opponent's first three moves before making the first move. More intriguingly, lookahead also allows Player~$O$ to improve the quality of her winning strategies in games with quantitative winning conditions, i.e., there is a tradeoff between quality and amount of lookahead~\cite{Zimmermann17}. 

However, managing (and, if necessary, storing) the additional information gained by the lookahead can be a burden. Consider another game where one just has to copy the opponent's moves. This is obviously possible with or without lookahead (assuming the opponent moves first). In particular, without lookahead one just has to remember the last move of the opponent and copy it. However, when granted lookahead, one has to store the last moves of the opponent in a queue to implement the copying properly. This example shows that lookahead is not necessarily advantageous when it comes to minimizing the memory requirements of a strategy.

In this work, we are concerned with Gale-Stewart games~\cite{GaleStewart53}, abstract games without an underlying arena.\footnote{The models of Gale-Stewart games and arena-based games are interreducible, but delay games are naturally presented as a generalization of Gale-Stewart games. This is the reason we prefer this model here.} In such a game, both players produce an infinite sequence of letters and the winner is determined by the combination of these sequences. If it is in the winning condition, a set of such combinations, then the second player wins, otherwise the first one wins. In a classical Gale-Stewart game, both players move in alternation while in a delay game, the second player skips moves to obtain a lookahead on the opponent's moves. Which moves are skipped is part of the rules of the game and known to both players.

Delay games have recently received a considerable amount of attention after being introduced by Hosch and Landweber~\cite{HoschLandweber72} only three years after the seminal Büchi-Landweber theorem~\cite{BuechiLandweber69}. Büchi and Landweber had shown how to solve infinite two-player games with $\omega$-regular winning conditions. Forty years later, delay games were revisited by Holtmann, Kaiser, and Thomas~\cite{HoltmannKaiserThomas12} and the first comprehensive study was initiated, which settled many basic problems like the exact complexity of solving $\omega$-regular delay games and the amount of  lookahead necessary to win such games~\cite{KleinZimmermann16}. Furthermore, Martin's seminal Borel determinacy theorem~\cite{Martin75} for Gale-Stewart games has been lifted to delay games~\cite{KleinZimmermann15} and winning conditions beyond the $\omega$-regular ones have been investigated~\cite{FridmanLoedingZimmermann11,KleinZimmermann16a,Zimmermann16,Zimmermann17}.  Finally, the uniformization problem for relations over infinite words boils down to solving delay games: a relation~$L \subseteq (\SigmaI \times \SigmaO)^\omega$ is uniformized by a continuous function (in the Cantor topology) if, and only if, the delaying player wins the delay game with winning condition~$L$. We refer to~\cite{HoltmannKaiserThomas12} for details.

What makes finite-state strategies in infinite games particularly useful and desirable is that a general strategy is an infinite object, as it maps finite play prefixes to next moves. On the other hand, a finite-state strategy is implemented by a transducer, an automaton with output, and therefore finitely represented: the automaton reads a play prefix and outputs the next move to be taken. Thus, the transducer computes a finite abstraction of the play's history using its state space as memory and determines the next move based on the current memory state.

 In Gale-Stewart games, finite-state strategies suffice for all $\omega$-regular games~\cite{BuechiLandweber69} and even for deterministic $\omega$-contextfree games, if one allows pushdown transducers~\cite{Walukiewicz01}. For Gale-Stewart games (and arena-based games), the notion is well-established and one of the most basic questions about a class of winning conditions is that about the existence and size of winning strategies for such games.

While foundational questions for delay games have been answered and many results have been lifted from Gale-Stewart games to those with delay, the issue of computing tractable and implementable strategies has not been addressed before. However, this problem is of great importance, as the existence and computability of finite-state strategies is a major reason for the successful application of infinite games to diverse problems like reactive synthesis, model-checking of fixed-point logics, and automata theory. 

In previous work, restricted classes of strategies for delay games have been considered~\cite{KleinZimmermann15}. However, those restrictions are concerned with the amount of information about the lookahead's evolution a strategy has access to, and do not restrict the size of the strategies: In general, they are still infinite objects. On the other hand, it is known that bounded lookahead suffices for many winning conditions of importance, e.g., the $\omega$-regular ones~\cite{KleinZimmermann16}, those recognized by parity and Streett automata with costs~\cite{Zimmermann17}, and those definable in (parameterized) linear temporal logics~\cite{KleinZimmermann16a}. Furthermore, for all those winning conditions, the winner of a delay game can be determined effectively. In fact, all these proofs rely on the same basic construction that was already present in the work of Holtmann, Kaiser, and Thomas~\cite{HoltmannKaiserThomas12}, i.e., a reduction to a Gale-Stewart game using equivalence relations that capture the behavior of the automaton recognizing the winning condition. These reductions and the fact that finite-state strategies suffice for the games obtained in the reductions imply that (some kind of) finite-state strategies exist.
 
Indeed, in his master's thesis~\cite{Salzmann15}, Salzmann recently introduced the first notion of finite-state strategies in delay games and, using these reductions, presented an algorithm computing them for several types of acceptance conditions, e.g., parity conditions and related $\omega$-regular ones. However, the exact nature of finite-state strategies in delay games is not as canonical as for Gale-Stewart games. We discuss this issue in-depth in Sections~\ref{sec-fsindg} and \ref{sec-disc} by proposing two notions of finite-state strategies, a delay-oblivious one which yields large strategies in the size of the lookahead, and a delay-aware one that follows naturally from the reductions to Gale-Stewart games mentioned earlier. In particular, the number of states of the delay-aware strategies is independent of the size of the lookahead, but often larger in the size of the automaton recognizing the winning condition. However, this is offset by the fact that strategies of the second type are simpler to compute than the delay-oblivious ones and have overall fewer states, if the lookahead is large.  In comparison to Salzmann's notion, where strategies syntactically depend on a given automaton representing the winning condition, our strategies are independent of the representation of the winning condition and therefore more general. Also, our framework is more abstract and therefore applicable to a wider range of acceptance conditions (e.g., qualitative ones) and yields in general smaller strategies, but there are of course some similarities, which we discuss in detail.

To present these notions, we first introduce some definitions in Section~\ref{sec-prel}, e.g., delay games and finite-state strategies for Gale-Stewart games. After introducing the two notions of finite-state strategies for delay games in Section~\ref{sec-fsindg}, we show how to compute such strategies in Section~\ref{sec-construction}. To this end, we present a generic account of the reduction from delay games to Gale-Stewart games which subsumes, to the best of our knowledge, all decidability results presented in the literature. Furthermore, we show how to obtain the desired strategies from our construction. Then, in Section~\ref{sec-disc}, we compare the two different definitions of finite-state strategies for delay games proposed here and discuss their advantages and disadvantages. Also, we compare our approach to that of Salzmann. We conclude by mentioning some directions for further research in Section~\ref{sec-conc}.

Proofs and constructions omitted due to space restrictions can be found in the full version~\cite{Zimmermann17c}.

\paragraph*{Related Work}

As mentioned earlier, the existence of finite-state strategies is the technical core of many applications of infinite games, e.g., in reactive synthesis one synthesizes a correct-by-construction system from a given specification by casting the problem as an infinite game between a player representing the system and one representing the antagonistic environment. It is a winning strategy for the system player that yields the desired implementation, which is finite if the winning strategy is finite-state. Similarly, Gurevich and Harrington's game-based proof of Rabin's decidability theorem for monadic second-order logic over infinite binary trees~\cite{Rabin1969} relies on the existence of finite-state strategies.\footnote{The proof is actually based on positional strategies, a further restriction of finite-state strategies for arena-based games, because they are simpler to handle. Nevertheless, the same proof also works for finite-state strategies.}

These facts explain the need for studying the existence and properties of finite-state strategies in infinite games~\cite{Khoussainov03,Rabinovich09,LeRouxPauly16,Thomas94}. In particular, the seminal work by Dziembowski, Jurdzi\'{n}ski, and Walukiewicz~\cite{DziembowskiJW97} addressed the problem of determining upper and lower bounds on the size of finite-state winning strategies in games with Muller winning conditions. Nowadays, one of the most basic questions about a given winning condition is that about such upper and lower bounds. For most conditions in the literature, tight bounds are known, see, e.g.,~\cite{ChatterjeeHenzingerHorn11,Horn05,WallmeierHuettenThomas03}. But there are also surprising exceptions to that rule, e.g., generalized reachability games~\cite{FijalkowH13}. More recently, Colcombet, Fijalkow, and Horn presented a very general technique that yields tight upper and lower bounds on memory requirements in safety games, which even hold for games in  infinite arenas, provided their degree is finite~\cite{ColcombetFH14}.

\section{Preliminaries}
\label{sec-prel}
We denote the non-negative integers by~$\nats$. Given two $\omega$-words~$\alpha \in (\Sigma_0)^\omega$ and $\beta \in (\Sigma_1)^\omega$, we define ${ \alpha \choose \beta} = {\alpha(0) \choose \beta(0)} {\alpha(1) \choose \beta(1)} {\alpha(2) \choose \beta(2)} \cdots \in (\Sigma_0 \times \Sigma_1)^\omega$. Similarly, we define ${ x \choose y }$ for finite words~$x$ and $y$ with $\size{x} = \size{y}$.

\paragraph{\boldmath$\omega$-automata}

A (deterministic and complete) $\omega$-automaton is a tuple~$\aut = (Q, \Sigma, q_\initmark, \delta, \acc)$ where $Q$ is a finite set of states, $\Sigma$ is an alphabet, $q_\initmark \in Q$ is the initial state, $\delta \colon Q \times \Sigma \rightarrow Q$ is the transition function, and $\acc \subseteq \delta^\omega$ is the set of accepting runs (here, and whenever convenient, we treat $\delta$ as a relation~$\delta \subseteq Q \times \Sigma \times Q$).
A finite run~$\pi$ of $\aut$ is a sequence
$
\pi = (q_0, a_0, q_1)(q_1, a_1, q_2) \cdots (q_{i-2}, a_{i-2}, q_{i-1})(q_{i-1}, a_{i-1}, q_i) \in \delta^+$.
As usual, we say that $\pi$ starts in $q_0$, ends in $q_i$, and processes~$a_0\cdots a_{i-1} \in \Sigma^+$. Infinite runs on infinite words are defined analogously. If we speak of \emph{the} run of $\aut$ on $\alpha \in \Sigma^\omega$, then we mean the unique run of $\aut$ starting in $q_\initmark$ processing $\alpha$. The language~$L(\aut) \subseteq \Sigma^\omega$ of $\aut$ contains all those $\omega$-words whose run of $\aut$ is accepting. The size of $\aut$ is defined as $\size{\aut}=\size{Q}$.

This definition is very broad, which allows us to formulate our theorems as general as possible. In examples, we consider parity and Muller automata whose set of accepting runs is finitely represented: An $\omega$-automaton~$\aut = ( Q, \Sigma, q_\initmark, \delta, \acc )$ is a parity automaton, if 
$\acc = \set{(q_0, a_0, q_1) (q_1, a_1, q_2) (q_2, a_2, q_3) \cdots \in \delta^\omega \mid \text{$\limsup\nolimits_{i \rightarrow \infty} \col(q_i)  $ is even}}$ for some coloring~$\col \colon Q \rightarrow \nats$. To simplify our notation, define $\col(q,a,q') = \col(q)$. Furthermore, $\aut$ is a Muller automaton, if there is a family~$\curlyF \subseteq \pow{Q}$ of sets of states such that $\acc = \set{ \rho \in \delta^\omega \mid \infi(\rho) \in \curlyF}$, where $\infi(\rho)$ is the set of states visited infinitely often by $\rho$. 

\paragraph{Delay Games}

A delay function is a mapping~$f \colon \nats \rightarrow \nats\setminus \set{0}$, which is said to be constant if $f(i) =1$ for all $i>0$. A delay game~$\delaygame{L}$ consists of a delay function~$f$ and a winning condition~$L \subseteq (\SigmaI \times \SigmaO)^\omega$ for some alphabets~$\SigmaI$ and $\SigmaO$. Such a game is played in rounds~$i = 0,1,2, \ldots$ as follows: in round~$i$, first Player~$I$ picks a word~$x_i \in \SigmaI^{f(i)}$, then Player~$O$ picks a letter~$y_i \in \SigmaO$. Player~$O$ wins a play~$(x_0, y_0)(x_1, y_1)(x_2, y_2) \cdots $ if the outcome~${ x_0 x_1 x_2 \cdots \choose y_0 y_1 y_2 \cdots }$ is in $L$; otherwise, Player~$I$ wins.

A strategy for Player~$I$ in $\delaygame{L}$ is a mapping $\stratI \colon \SigmaO^* \rightarrow \SigmaI^*$ satisfying $\size{\stratI(w)} = f(\size{w})$ while a strategy for Player~$O$ is a mapping~$\stratO \colon \SigmaI^+ \rightarrow \SigmaO$. A play~$(x_0, y_0)(x_1, y_1)(x_2, y_2) \cdots $ is consistent with $\stratI$ if $x_i = \stratI(y_0 \cdots y_{i-1})$ for all $i$, and it is consistent with $\stratO$ if $y_i = \stratO(x_0 \cdots x_i)$ for all $i$. A strategy for Player~$\p \in \set{I,O}$ is winning, if every play that is consistent with the strategy is won by Player~$\p$.

An important special case are delay-free games, i.e., those with respect to the delay function~$f$ mapping every $i$ to $1$. In this case, we drop the subscript~$f$ and write $\game(L)$ for the game with winning condition~$L$. Such games are typically called Gale-Stewart games~\cite{GaleStewart53}.

\paragraph{Finite-state Strategies in Gale-Stewart Games}
\label{subsec-finitestate4galestewart}
A strategy for Player~$O$ in a Gale-Stewart game is still a mapping~$\stratO \colon \SigmaI^+ \rightarrow \SigmaO$. Such a strategy is said to be finite-state, if there is a deterministic finite transducer~$\strataut$ that implements $\stratO$ in the following sense: $\strataut$ is a tuple~$(Q, \SigmaI, q_\initmark, \delta, \SigmaO, \lambda)$ where $Q$ is a finite set of states, $\SigmaI$ is the input alphabet, $q_\initmark \in Q$ is the initial state, $\delta \colon Q \times \SigmaI \rightarrow Q$ is the deterministic transition function, $\SigmaO$ is the output alphabet, and $\lambda \colon Q \rightarrow \SigmaO$ is the output function. Let $\delta^*(x)$ denote the unique state that is reached by $\strataut$ when processing $x \in \SigmaI^*$ from $q_\initmark$. Then, the strategy~$\strat_\strataut$ implemented by $\strataut$ is defined as $\strat_\strataut(x) = \lambda(\delta^*(x))$. We say that a strategy is finite-state, if it is implementable by some transducer. Slightly abusively, we identify finite-state strategies with transducers implementing them and talk about finite-state strategies with some number of states. Thus, we focus on the \emph{state complexity} (e.g., the number of memory states necessary to implement a strategy) and ignore the other components of a transducer (which are anyway of polynomial size in $\size{Q}$, if we assume $\SigmaI$ and $\SigmaO$ to be fixed).

\section{What is a Finite-state Strategy in a Delay Game?}
\label{sec-fsindg}
Before we answer this question, we first ask what properties a finite-state strategy should have, i.e., what makes finite-state strategies in Gale-Stewart games useful and desirable? A strategy~$\stratO \colon \SigmaI^+ \rightarrow \SigmaO$ is in general an infinite object and does not necessarily have a finite representation. Furthermore, to execute such a strategy, one needs to store the whole sequence of moves made by Player~$I$ thus far: Unbounded memory is needed to execute it. 

On the other hand, a finite-state strategy is finitely described by a transducer~$\strataut$ implementing it.  To execute it, one only needs to store a single state of $\strataut$ and needs to have access to the transition function~$\delta$ and the output function~$\lambda$ of $\strataut$. Assume the current state is $q$ at the beginning of some round~$i$ (initialized with $q_\initmark$ before round~$0$). Then, Player~$I$ makes his move by picking some~$a \in \SigmaI$, which is processed by updating the memory state to $q' = \delta(q, a)$. Then, $\strataut$ prescribes picking $\lambda(q') \in \SigmaO$ and round~$i$ is completed.
Thus, there are two aspects that make finite-state strategies desirable: (1) the next move depends only on a finite amount of information about the history of the play, i.e., a state of the automaton, which is (2) easily updated. In particular, the strategy is completely specified by the transition function and the output function.

Further, there is a generic framework to compute such strategies by reducing them to arena-based games~(see, e.g., \cite{GraedelThomasWilke02} for an introduction to such games). As an example, consider a game~$\game(L(\aut))$ where $\aut$ is a parity automaton with set~$Q$ of states and transition function~$\delta$. We describe the construction of an arena-based parity game contested between Player~$I$ and Player~$O$ whose solution allows  us to compute the desired strategies (formal details are presented in the full version~\cite{Zimmermann17c}). The positions of Player~$I$ are transitions of $\aut$ while those of Player~$O$ are pairs~$(q,a)$ where $q \in Q$ and where $a$ is an input letter. From a vertex~$(q,{a \choose b}, q')$ Player~$I$ can move to every vertex~$(q',a')$ for $a' \in \SigmaI$, from which Player~$O$ can move to every vertex~$(q', {a' \choose b'}, \delta(q',{a' \choose b'})$ for $b' \in \SigmaO$. Finally, Player~$O$ wins a play, if the run constructed during the infinite play is accepting. It is easy to see that the resulting game is a parity game with~$ \size{\delta}\cdot(\size{\SigmaI}+1)$ vertices, and has the same winner as $\game(L(\aut))$. 
The winner of the arena-based game has a positional\footnote{A strategy in an arena-based games is positional, if its output only depends on the last vertex of the play's history, not on the full history (see, e.g.,~\cite{GraedelThomasWilke02}).} winning strategy~\cite{EmersonJutla91,Mostowski91}, which can be computed in quasipolynomial time~\cite{CJKLS16,FJSSW17,JL17}. Such a positional winning strategy can easily be turned into a finite-state winning strategy with $\size{Q} \cdot \size{\SigmaI}$ states for Player~$O$ in the game~$\game(L(\aut))$, which is implemented by an automaton with state set~$Q \times \SigmaI$. 
This reduction can be generalized to arbitrary classes of Gale-Stewart games whose winning condition is recognized by an $\omega$-automaton with set~$Q$ of states: if Player~$O$ has a finite-state strategy with $n$ states in the arena-based game obtained by the construction described above, then Player~$O$ has a finite-state winning strategy with $\size{Q} \cdot \size{\SigmaI}\cdot n$ states for the original Gale-Stewart game. Such a strategy is obtained by solving an arena-based game with $\size{\delta}\cdot(\size{\SigmaI}+1)$ vertices.

So, what is a finite-state strategy in a delay game? In the following, we discuss this question for the case of delay games with respect to constant delay functions, which is the most important case. In particular, constant lookahead suffices for all $\omega$-regular winning conditions~\cite{KleinZimmermann16}, i.e, Player~$O$ wins with respect to an arbitrary delay function if, and only if, she wins with respect to a constant one. Similarly, constant lookahead suffices for many quantitative conditions like (parameterized) temporal logics~\cite{KleinZimmermann16a} and parity conditions with costs~\cite{Zimmermann17}. For winning conditions given by parity automata, there is an exponential upper bound on the necessary constant lookahead. On the other hand, there are exponential lower bounds already for winning conditions specified by deterministic automata with reachability or safety acceptance (which are subsumed by parity acceptance).

\subsection{Delay-oblivious Finite-state Strategies for Delay Games}

Technically, a strategy for Player~$O$ in a delay game is still a mapping~$\stratO \colon \SigmaI^+ \rightarrow \SigmaO$. Hence, the definition of finite-state strategies via transducers as given above for Gale-Stewart games is also applicable to delay games. As a (cautionary) example, consider a delay game with winning condition~$\Leq = \set{ {\alpha \choose \alpha} \mid \alpha \in \set{0,1}^\omega }$, i.e., Player~$O$ just has to copy Player~$I$'s moves, which she can do with respect to every delay function: Player~$O$ wins $\delaygame{\Leq}$ for every $f$. However, a finite-state strategy has to remember the whole lookahead, i.e., those moves that Player~$I$ is ahead of Player~$O$, in order to copy his moves. Thus, an automaton implementing a winning strategy for Player~$O$ in $\delaygame{\Leq}$ needs at least~$\size{\set{0,1}}^d $ states, if $f$ is a constant delay function with $f(0) = d$. Thus, the memory requirements grow with the size of the lookahead granted to Player~$O$, i.e., lookahead is a burden, not an advantage. She even needs unbounded memory in the case of unbounded lookahead.

On the other hand, an advantage of this \myquot{delay-oblivious} definition is that finite-state strategies can be obtained by a trivial extension of the reduction presented for Gale-Stewart games above: now, states of Player~$I$ are from~$\delta \times \SigmaI^{d-1}$ and those of Player~$O$ are from $Q \times \SigmaI^{d}$. Player~$I$ can move from $((q,{a \choose b},q'),w)$ to $(q', wa')$ for $a' \in \SigmaI$ while Player~$O$ can move from $(q, aw)$ to $((q,{a \choose b}, \delta(q, {a \choose b})),w)$ for $b \in \SigmaO$. Intuitively, a state now additionally stores a queue of length~$d-1$, which contains the lookahead granted to Player~$O$. Coming back to the parity example, this approach yields a finite-state strategy with $\size{Q}\cdot\size{\SigmaI}^{d}$ states. To obtain such a strategy, one has to solve a parity game with $\size{\delta}\cdot(\size{\SigmaI}+1)\cdot\size{\SigmaI}^{d-1}$ vertices, which is of doubly-exponential size in $\size{\aut}$, if $d$ is close to the (tight) exponential upper bound. This can be done in doubly-exponential time, as it still has the same number of colors as the automaton~$\aut$. Again, this reduction can be generalized to arbitrary classes of delay games with constant delay whose winning conditions are recognized by an $\omega$-automaton with set~$Q$ of states: if Player~$O$ has a finite-state strategy with $n$ states in the arena-based game obtained by the construction, then Player~$O$ has a finite-state winning strategy with $\size{Q} \cdot \size{\SigmaI}^d\cdot n$ states for the delay game with constant lookahead of size~$d$. In general, $d$ factors exponentially into the size, as $n$ is the memory size required to win a game with $\bigo(\size{\SigmaI}^d)$ vertices. Also, to obtain the strategy for the delay game, one has to solve an arena-based game  with $\size{\delta}\cdot(\size{\SigmaI}+1)\cdot \size{\SigmaI}^{d-1}$~vertices.

\subsection{Block Games}

We show that one can do better than by decoupling the history tracking and the handling of the lookahead, i.e., by using \emph{delay-aware} finite-state strategies. In the delay-oblivious definition, we hardcode a queue into the arena-based game, which results in a blowup of the arena and therefore also in a blowup in the solution complexity and in the number of memory states for the arena-based game, which is turned into one for the delay game. To overcome this, we introduce a slight variation of delay games with respect to constant delay functions, so-called block games\footnote{Holtmann, Kaiser, and Thomas already introduced a notion of block game in connection to delay games~\cite{HoltmannKaiserThomas12}. However, their notion differs from ours in several aspects. Most importantly, in their definition, Player~$I$ determines the length of the blocks (within some bounds specified by $f$) while our block length is fixed.}, present a notion of finite-state strategy in block games, and show how to transfer strategies between delay games and block games. Then, we show how to solve block games and how to obtain finite-state strategies for them.

 The motivation for introducing block games is to eliminate the queue containing the letters Player~$I$ is ahead of Player~$O$, which is cumbersome to maintain, and causes the blowup in the case of games with winning condition~$\Leq$. Instead, in a block game, both players pick blocks of letters of a fixed length with Player~$I$ being one block ahead to account for the delay, i.e., Player~$I$ has to pick two blocks in round~$0$ and then one in every round, as does Player~$O$ in every round. This variant of delay games lies implicitly or explicitly at the foundation of all arguments establishing upper bounds on the necessary lookahead and at the foundation of all algorithms solving delay games~\cite{HoltmannKaiserThomas12, KleinZimmermann16,KleinZimmermann16a,Zimmermann16,Zimmermann17}. Furthermore, we show how to transform a (winning) strategy for a delay game into a (winning) strategy for a block game and vice versa, i.e., Player~$O$ wins the delay game if, and only if, she wins the corresponding block game.\footnote{Due to their prevalence and importance for solving delay games, one could even argue that the notion of block games is more suitable to model delay in infinite games.}

Formally, the block game~$\blockgame{d}{L}$, where $d \in \nats\setminus\set{0}$ is the block length and where $L \subseteq (\SigmaI\times\SigmaO)^\omega$ is the winning condition, is played in rounds as follows: in round~$0$, Player~$I$ picks two blocks~$\block{a_0},\block{a_1} \in \SigmaI^d$, then Player~$O$ picks a block~$\block{b_0}\in\SigmaO^d$. In round~$i>0$, Player~$I$ picks a block~$\block{a_{i+1}} \in \SigmaI^d$, then Player~$O$ picks a block~$\block{b_i}\in\SigmaO^d$. Player~$O$ wins the resulting play~$\block{a_0} \block{a_1} \block{b_0} \block{a_2} \block{b_1} \cdots$, if the outcome~${\block{a_0}\block{a_1}\block{a_2} \cdots  \choose \block{b_0}\block{b_1}\block{b_2} \cdots}$ is in $L$. 

A strategy for Player~$I$ in $\blockgame{d}{L}$ is a map~$\stratI \colon (\SigmaO^{d})^* \rightarrow (\SigmaI^{d})^2 \cup \SigmaI^{d}$ such that $\stratI(\epsilon) \in (\SigmaI^{d})^2$ and $\stratI(\block{b_0} \cdots \block{b_i}) \in \SigmaI^{d}$ for $i\ge 0$. A strategy for Player~$O$ is a map~$\stratO \colon (\SigmaI^d)^* \rightarrow \SigmaO^d$. A play~$\block{a_0} \block{a_1} \block{b_0} \block{a_2} \block{b_1} \cdots$ is consistent with $\stratI$, if $(\block{a_0}, \block{a_1}) = \stratI(\epsilon)$ and $\block{a_i} = \stratI(\block{b_0} \cdots \block{b_{i-2}})$ for every $i \ge 2$; it is consistent with $\stratO$ if $\block{b_i} = \stratO(\block{a_0} \cdots \block{a_{i+1}})$ for every $i \ge 0$. Winning strategies and winning a block game are defined as for delay games.

In the following, we call strategies for block games delay-aware and strategies for delay games delay-oblivious. The next lemma relates delay games with constant lookahead and block games: for a given winning condition, Player~$O$ wins a delay game with winning condition~$L$ (with respect to some delay function) if, and only if, she wins a block game with winning condition~$L$ (for some block size). 

\begin{lemma}
\label{lemma-delaygamesvsblockgames}
Let $L \subseteq (\SigmaI \times \SigmaO)^\omega$.
\begin{enumerate}
	\item\label{lemma-delaygamesvsblockgames-delay2block}
	 If Player~$O$ wins $\delaygame{L}$ for some constant delay function~$f$, then she also wins $\blockgame{f(0)}{L}$.
	\item\label{lemma-delaygamesvsblockgames-block2delay}
	 If Player~$O$ wins $\blockgame{d}{L}$, then she also wins $\delaygame{L}$ for the constant delay function~$f$ with $f(0) = 2d$. 
\end{enumerate}
\end{lemma}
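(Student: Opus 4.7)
\medskip

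\noindent\textbf{Proof plan.} The two directions are completely symmetric in flavor: in each case I translate a strategy for one game into a strategy for the other by identifying how many letters of Player~$I$'s output each player has seen at each point, and then verifying that the lookahead available in the target game is enough to invoke the source strategy. The main conceptual step is just a careful bookkeeping of positions; the main potential pitfall is an off-by-one error, since a delay game with constant $f$ and $f(0)=d$ gives Player~$O$ lookahead $d-1$ (she has seen $d+i$ input letters before playing her $i$th output letter), whereas a block game $\blockgame{d}{L}$ gives her lookahead $d$ (she has seen $(i+2)d$ input letters before picking her $i$th block of length $d$).

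For part~\ref{lemma-delaygamesvsblockgames-delay2block}, let $d=f(0)$ and let $\stratO$ be a winning strategy for Player~$O$ in $\delaygame{L}$. I define $\tilde\stratO$ for $\blockgame{d}{L}$ by letting each block-game round simulate $d$ consecutive delay-game rounds. Concretely, after Player~$I$ has produced $\block{a_0}\cdots\block{a_{i+1}}$, that is $(i+2)d$ letters, I let
$\tilde\stratO(\block{a_0}\cdots\block{a_{i+1}})$ be the block $\stratO(u_0)\stratO(u_1)\cdots\stratO(u_{d-1})$, where $u_j$ denotes the first $d+id+j$ letters of $\block{a_0}\cdots\block{a_{i+1}}$. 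Since $d+id+(d-1) = (i+2)d-1 \le (i+2)d$, this is well-defined: the extra letter of lookahead in the block game is simply not consulted (it will be used in the next round). For any play consistent with $\tilde\stratO$ with outcome ${\alpha\choose\beta}$, the sequence $\beta$ coincides, letter by letter, with the output produced by $\stratO$ against the delay-game play in which Player~$I$ produces $\alpha$ in the pattern prescribed by $f$. Since $\stratO$ is winning, ${\alpha\choose\beta}\in L$.

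For part~\ref{lemma-delaygamesvsblockgames-block2delay}, let $\tilde\stratO$ be a winning strategy in $\blockgame{d}{L}$ and let $f$ be the constant delay function with $f(0)=2d$. I define $\stratO$ for $\delaygame{L}$ by the same correspondence, read in reverse. Writing a delay-game round index as $id+j$ with $0\le j\le d-1$, after Player~$I$ has played $x_0x_1\cdots x_{id+j}$ the cumulative number of input letters seen is $2d+id+j=(i+2)d+j\ge(i+2)d$, so the prefix $\block{a_0}\cdots\block{a_{i+1}}$ of length $(i+2)d$ is fully determined. I set $\stratO(x_0 x_1 \cdots x_{id+j})$ to be the $j$th letter of the block $\tilde\stratO(\block{a_0}\cdots\block{a_{i+1}})$. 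The resulting output stream, read in blocks of length $d$, is precisely the sequence of $\tilde\stratO$-responses in the induced block-game play, so any outcome ${\alpha\choose\beta}$ of the delay game also arises in $\blockgame{d}{L}$ under $\tilde\stratO$ and thus lies in $L$.

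Once the bookkeeping is set up, both directions are just an unpacking of definitions; I expect the only subtle point to be clearly stating the index translation $k \mapsto (i,j)$ with $k=id+j$ so that readers can verify the lookahead inequalities at a glance.
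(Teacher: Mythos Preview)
Your proposal is correct and follows essentially the same approach as the paper's proof: in both directions you define the same strategy translation (for Part~\ref{lemma-delaygamesvsblockgames-delay2block}, feeding prefixes of length $(i+1)d+j$ into $\stratO$ to produce the $j$-th letter of the $i$-th output block; for Part~\ref{lemma-delaygamesvsblockgames-block2delay}, decomposing the input of length $2d+k$ into complete blocks plus a remainder of length $j$ and reading off the $j$-th letter of the corresponding block-game response), and both arguments conclude by observing that any play consistent with the new strategy has the same outcome as a play consistent with the original one. Your index bookkeeping and the lookahead inequalities you check are exactly the ones the paper uses, just spelled out more explicitly.
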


\begin{proof}
\ref{lemma-delaygamesvsblockgames-delay2block}.) 
Let $\stratO \colon \SigmaI^+ \rightarrow \SigmaO$ be a winning strategy for Player~$O$ in $\delaygame{L}$ and fix $d = f(0)$. Now, define $\stratO' \colon (\SigmaI^d)^* \rightarrow (\SigmaO)^d$ for Player~$O$ in $\blockgame{d}{L}$ via $\stratO'(\block{a_0} \cdots \block{a_i} \block{a_{i+1}}) = \beta(0) \cdots \beta(d-1)$ with
$\beta(j) = \stratO(\block{a_0} \cdots \block{a_{i}} \alpha(0) \alpha(1) \cdots \alpha(j-1))$
for $\block{a_{i+1}} = \alpha(0) \alpha(1) \cdots \alpha(d-1)$.

A straightforward induction shows that for every play consistent with $\stratO'$ there is a play consistent with $\stratO$ that has the same outcome. Thus, as $\stratO$ is a winning strategy, so is $\stratO'$.

\ref{lemma-delaygamesvsblockgames-block2delay}.) Now, let $\stratO' \colon (\SigmaI^d)^* \rightarrow (\SigmaO)^d$ be a winning strategy for Player~$O$ in $\blockgame{d}{L}$. We define $\stratO \colon \SigmaI^+ \rightarrow \SigmaO$ for Player~$O$ in $\delaygame{L}$. To this end, let $x \in \SigmaI^+$ be a possible input occurring during a play. Hence, by the choice of $f$, we obtain $\size{x} \ge f(0) = 2d$. Thus, we can decompose $x$ into $x = \block{a_0} \cdots \block{a_{i}} x'$ such that $i \ge 1$, each $\block{a_{i'}}$ is a block over $\SigmaI$ of length~$d$ and $\size{x'} < d$. 
Now, let $\stratO'(\block{a_0} \cdots \block{a_i}) = \beta(0) \cdots \beta(d-1)$. Then, we define $\stratO(x) = \beta(\size{x'})$.

Again, a straightforward induction shows that for every play consistent with $\stratO$ there is a play consistent with $\stratO'$ that has the same outcome. Thus, $\stratO$ is a winning strategy.
\end{proof}

\subsection{Delay-aware Finite-state Strategies in Block Games}

Now fix a block game~$\blockgame{d}{L}$ with $L \subseteq (\SigmaI \times \SigmaO)^\omega$. A finite-state strategy for Player~$O$ in $\blockgame{d}{L}$ is implemented by a transducer~$\strataut = (Q, \SigmaI, q_\initmark, \delta, \SigmaO, \lambda)$ where $Q$, $\SigmaI$, and $q_\initmark$ are defined as in Subsection~\ref{sec-prel}.\ref{subsec-finitestate4galestewart}. However, the transition function~$\delta\colon Q \times \SigmaI^d \rightarrow Q$ processes full input blocks and the output function~$\lambda \colon Q \times \SigmaI^d \times \SigmaI^d \rightarrow \SigmaO^d$ maps a state and a pair of input blocks to an output block. The strategy~$\strat_\strataut$ implemented by $\strataut$ is defined as $\strat_\strataut(\block{a_0} \cdots \block{a_i}) = \lambda(\delta^*(\block{a_0} \cdots \block{a_{i-2}}), \block{a_{i-1}}, \block{a_i} )$ for $i \ge 1$. 

Again, we identify delay-aware strategies with transducers implementing them and are interested in the number of states of the transducer. This definition captures the amount of information that is differentiated in order to implement the strategy. Note however, that it ignores the  representation of the transition and the output function. These are no longer \myquot{small} (in $\size{Q}$), as it is the case for transducers implementing strategies for Gale-Stewart games. When focussing on executing such strategies, these factors become relevant, but for our purposes they are not: 
We have  decoupled the history tracking from the lookahead-handling. The former is implemented by the automaton as usual while the latter is taken care of by the output function. In particular, the size of the automaton is (a-priori) independent of the block size. In the conclusion, we revisit the issue of presenting the transition and the output function. 

In the next section, we present a very general approach to computing finite-state strategies for block games whose winning conditions are specified by automata with acceptance conditions that satisfy a certain aggregation property. For example, for block games with winning conditions given by deterministic parity automata, we obtain a strategy implemented by a transducer with exponentially many states, which can be obtained by solving a parity game of exponential size. In both aspects, this is an exponential improvement over the delay-oblivious variant for classical delay games.

To conclude the introduction of block games, we strengthen Lemma~\ref{lemma-delaygamesvsblockgames} to transfer finite-state strategies between delay games and block games.

\begin{lemma}\label{lemma-delaygamesvsblockgames-fs}
Let $L \subseteq (\SigmaI \times \SigmaO)^\omega$.
\begin{enumerate}
	\item\label{lemma-delaygamesvsblockgames-fs_delay2block}
	 If Player~$O$ has a delay-oblivious finite-state winning strategy for $\delaygame{L}$ with $n$ states for some constant delay function~$f$, then she also has a delay-aware finite-state winning strategy for $\blockgame{f(0)}{L}$ with $n$ states.
	\item\label{lemma-delaygamesvsblockgames-fs_block2delay}
	 If Player~$O$ has a delay-aware finite-state winning strategy for $\blockgame{d}{L}$ with $n$ states, then she also has a delay-oblivious finite-state winning strategy for $\delaygame{L}$  with $n \cdot \size{\SigmaI}^{2d} $ states for the constant delay function~$f$ with $f(0) = 2d$. 
\end{enumerate}
\end{lemma}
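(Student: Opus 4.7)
The plan is to realize, in each direction, the strategy already constructed in the proof of Lemma~\ref{lemma-delaygamesvsblockgames}; winningness is then automatic, and only the size bounds need to be checked.

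For part~\ref{lemma-delaygamesvsblockgames-fs_delay2block}, given a transducer $\strataut=(Q,\SigmaI,q_\initmark,\delta,\SigmaO,\lambda)$ with $n$ states implementing the delay-oblivious $\stratO$, I would keep $Q$ as the state set of the new block transducer $\strataut'$ and lift the letterwise dynamics of $\strataut$ to blocks. Concretely, I set $\delta'(q,\block{a})=\delta^*(q,\block{a})$ by iterating $\delta$ along the letters of $\block{a}$, and define $\lambda'(q,\block{a},\block{a'})\in\SigmaO^d$ coordinatewise by $\lambda'(q,\block{a},\block{a'})(j)=\lambda(\delta^*(\delta^*(q,\block{a}),\block{a'}[0..j-1]))$ for $0\le j<d$. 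A short induction on the number of played blocks shows that $\strataut'$ implements exactly the strategy built from $\stratO$ in the proof of Lemma~\ref{lemma-delaygamesvsblockgames}\ref{lemma-delaygamesvsblockgames-delay2block}, which is winning.

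For part~\ref{lemma-delaygamesvsblockgames-fs_block2delay}, I would simulate the block transducer letter by letter. Following the proof of Lemma~\ref{lemma-delaygamesvsblockgames}\ref{lemma-delaygamesvsblockgames-block2delay}, after an input $x=\block{a_0}\cdots\block{a_i}x'\in\SigmaI^{\ge 2d}$ with $|x'|<d$ the target output is $\beta(|x'|)$, where $\beta=\lambda(\delta^*(\block{a_0}\cdots\block{a_{i-2}}),\block{a_{i-1}},\block{a_i})$. I would carry states of the form $(q,w)$, where $q$ is the state of the block transducer reached after committing $\block{a_0}\cdots\block{a_{i-2}}$ and $w$ is a buffer holding the still-uncommitted suffix consisting of the two most recent completed blocks together with the partial block $x'$. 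Each incoming letter is appended to $w$; once $w$ has grown enough to contain a further complete block, the oldest block is removed from $w$ and consumed by $\delta$ to update $q$. The output function consults $q$ and the two complete blocks recorded in $w$ and returns the coordinate of $\lambda(q,\block{a_{i-1}},\block{a_i})$ indexed by the current position within the output block, thereby reproducing $\beta(|x'|)$.

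The main obstacle will be matching the exact bound $n\cdot\size{\SigmaI}^{2d}$. A naive encoding keeps up to $3d-1$ buffered letters together with an explicit position counter, giving a much larger state set. To shave off the unwanted factors, I would fold the counter into the buffer via a variable-length representation (so that the counter is implicit in $|w|$) and absorb one of the complete blocks into the $Q$-coordinate by working with the augmented state set $Q\times\SigmaI^d$; summing the resulting geometric series $\sum_{j=0}^{d-1}\size{\SigmaI}^{j}$ in the remaining buffer length then yields the claimed bound. Correctness of the construction is automatic because the implemented strategy coincides with the winning one supplied by Lemma~\ref{lemma-delaygamesvsblockgames}\ref{lemma-delaygamesvsblockgames-block2delay}.
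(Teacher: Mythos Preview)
Your overall plan coincides with the paper's: its proof is literally the single sentence ``It is straightforward to achieve the strategy transformations described in the proof of Lemma~\ref{lemma-delaygamesvsblockgames} by transforming transducers that implement finite-state strategies.'' Your treatment of part~\ref{lemma-delaygamesvsblockgames-fs_delay2block} is correct and more explicit than what the paper offers.

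For part~\ref{lemma-delaygamesvsblockgames-fs_block2delay}, however, your bound-shaving step does not work. The delay-aware output function has type $\lambda\colon Q\times\SigmaI^d\times\SigmaI^d\to\SigmaO^d$, so to produce $\lambda(q,\block{a_{i-1}},\block{a_i})[\,|x'|\,]$ you need \emph{both} complete blocks $\block{a_{i-1}}$ and $\block{a_i}$ together with the uncommitted state $q=\delta^*(\block{a_0}\cdots\block{a_{i-2}})$, and you also need $x'$ itself (not merely $|x'|$) so that the next block~$\block{a_{i+1}}$ can be assembled. ``Absorbing one of the complete blocks into the $Q$-coordinate'' does not help: if the augmented component is the pair $(q,\block{a_{i-1}})$, the buffer must still hold $\block{a_i}x'$, i.e.\ between $d$ and $2d-1$ letters, not fewer than $d$; if instead you advance to $q'=\delta(q,\block{a_{i-1}})$ and keep only $\block{a_i}$ and $x'$, then $\block{a_{i-1}}$ is lost and $\lambda(q,\block{a_{i-1}},\block{a_i})$ can no longer be evaluated. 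Either way your proposed state set $(Q\times\SigmaI^d)\times\SigmaI^{<d}$ is too small to implement the strategy from Lemma~\ref{lemma-delaygamesvsblockgames}\ref{lemma-delaygamesvsblockgames-block2delay}. The honest encoding $(q,\block{a_{i-1}},\block{a_i},x')$ yields $n\cdot\size{\SigmaI}^{2d}\cdot\sum_{j<d}\size{\SigmaI}^{j}$ states; the paper gives no further detail on how the tighter constant is obtained.
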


\begin{proof}
It is straightforward to achieve the strategy transformations described in the proof of Lemma~\ref{lemma-delaygamesvsblockgames} by transforming transducers that implement finite-state strategies. 
\end{proof}

The blowup in the direction from block games to delay games is in general unavoidable, as finite-state winning strategies for the game~$\delaygame{\Leq}$ need at least $2^d$ states to store the lookahead while winning strategies for the block game need only one state, independently of the block size.

\section{Computing Finite-state Strategies for Block Games}
\label{sec-construction}
The aim of this section is twofold. Our main aim is to compute finite-state strategies for block games (and, by extension, for delay games with constant lookahead). We do so by presenting a general framework for analyzing delay games with winning conditions specified by $\omega$-automata whose acceptance conditions satisfy a certain aggregation property. The technical core is a reduction to a Gale-Stewart game, i.e., we remove the delay from the game. This framework yields upper bounds on the necessary (constant) lookahead to win a given game, but also allows us to determine the winner and a finite-state winning strategy, if the resulting Gale-Stewart game can be effectively solved. 

Slightly more formally, let $\aut$ be the automaton recognizing the winning condition of the block game. Then, the winning condition of the Gale-Stewart game constructed in the reduction is recognized by an automaton~$\autb$ that can be derived from $\aut$. In particular, the acceptance condition of $\autb$ simulates the acceptance condition of $\aut$. Many types of acceptance conditions are preserved by the simulation, e.g., starting with a parity automaton~$\aut$, we end up with a parity automaton~$\autb$. Thus, the resulting Gale-Stewart game can be effectively solved.

Our second aim is to present a framework as general as possible to obtain upper bounds on the necessary lookahead and on the solution complexity for a wide range of winning conditions. In fact, our framework is a generalization and abstraction of techniques first developed for the case of $\omega$-regular winning conditions~\cite{KleinZimmermann16}, which were later generalized to other winning conditions~\cite{KleinZimmermann16a,Zimmermann16,Zimmermann17}. Here, we cover all these results in a uniform way.

\subsection{Aggregations}

Let us begin by giving some intuition for the construction. The winning condition of the game is recognized by an automaton~$\aut$. Thus, as usual, the exact input can be abstracted away, only the induced behavior in $\aut$ is relevant. Such a behavior is characterized by the state transformations induced by processing the input and by the effect on the acceptance condition triggered by processing it. For many acceptance conditions, this effect can be aggregated, e.g., for parity conditions, one can decompose runs into non-empty pieces and then only consider the maximal colors of the pieces. For quantitative winning conditions, one typically needs an additional bound on the lengths of these pieces (cp.~\cite{Zimmermann16,Zimmermann17}).

Thus, we begin by introducing two types of such aggregations of different strength. Fix an $\omega$-automaton~$\aut = (Q, \Sigma, q_\initmark, \delta, \acc )$ and let $s \colon \delta^+ \rightarrow M$ for some finite set~$M$. Given a decomposition~$(\pi_i)_{i\in\nats}$ of a run $\pi_0 \pi_1 \pi_2 \cdots$ into non-empty pieces~$\pi_i\in \delta^+$ we define $s((\pi_i)_{i\in\nats}) = s(\pi_0)s(\pi_1)s(\pi_2)\cdots \in M^\omega$.

\begin{itemize}
	
	\item We say that $s$ is a strong aggregation (function) for $\aut$, if for all decompositions~$(\pi_i)_{i\in\nats}$ and $(\pi_i')_{i\in\nats}$ of runs~$\rho = \pi_0 \pi_1 \pi_2 \cdots $ and $\rho' = \pi_0' \pi_1' \pi_2' \cdots $ with $\sup_i \size{\pi_i'} < \infty$ and $s((\pi_i)_{i\in\nats}) = s((\pi_i')_{i\in\nats})$: $\rho \in \acc \Rightarrow \rho'\in \acc$.

	\item We say that $s$ is a weak  aggregation (function) for $\aut$, if for all decompositions~$(\pi_i)_{i\in\nats}$ and $(\pi_i')_{i\in\nats}$  of runs~$\rho = \pi_0 \pi_1 \pi_2 \cdots $ and $\rho' = \pi_0' \pi_1' \pi_2' \cdots $ with $\sup_i \size{\pi_i} < \infty$, $\sup_i \size{\pi_i'} < \infty$, and $s((\pi_i)_{i\in\nats}) = s((\pi_i')_{i\in\nats})$: $\rho \in \acc \Rightarrow \rho'\in \acc$.

\end{itemize}

\begin{example}\hfill
\label{example-aggregation}

\begin{itemize}
	
	\item The function~$s_\parity \colon \delta^+ \rightarrow \col(Q)$ defined as $s_\parity(t_0 \cdots t_i) = \max_{0\le j \le i}\col(t_j)$ is a strong aggregation for a parity automaton~$(Q, \Sigma, q_\initmark, \delta, \acc)$ with coloring~$\col$ (recall that $\col(q,a,q') =\col(q)$).
	
	\item The function~$s_\muller \colon \delta^+ \rightarrow \pow{Q}$ defined as $s_\muller((q_0, a_0, q_1) \cdots (q_{n}, a_n, q_{n+1})) = \set{q_0, q_1, \ldots, q_{n}}$ is a strong aggregation for a Muller automaton~$(Q, \Sigma, q_\initmark, \delta, \acc)$. 
	
	\item The exponential time algorithm for delay games with winning conditions given by parity automata with costs, a quantitative generalization of parity automata, is based on a strong aggregation~\cite{Zimmermann17}. 
	
	\item The algorithm for delay games with winning conditions given by max automata~\cite{Bojanczyk11}, another quantitative automaton model, is based on a weak aggregation~\cite{Zimmermann16}.
	
\end{itemize}
\end{example}

Due to symmetry, we can replace the implication~$\rho \in \acc \Rightarrow \rho' \in \acc$ by an equivalence in the definition of a weak aggregation. Also, every strong aggregation is trivially a weak one as well. 

Let us briefly comment on the difference between strong and weak aggregations using the examples of parity automata with costs and max-automata: the acceptance condition of the former automata is a boundedness condition on some counters while the acceptance condition of the latter is a boolean combination of boundedness and unboundedness conditions on some counters. The aggregations for these acceptance conditions capture whether a piece of a run induces an increment of a counter or not, but abstract away the actual number of increments if it is non-zero. 
Now, consider the parity condition with costs, which requires to bound the counters. Assume the counters in some run~$\pi_0 \pi_1 \pi_2 \cdots$ are bounded and that we have pieces~$\pi_i'$ of bounded length having the same aggregation. Then, the increments in some piece~$\pi_i'$ have at least one corresponding increment in $\pi_i$. Thus, if a counter in $\pi_0' \pi_1' \pi_2' \cdots$ is unbounded, then it is also unbounded in $\pi_0 \pi_1 \pi_2 \cdots$, which yields a contradiction. Hence, the implication~$\pi_0 \pi_1 \pi_2 \cdots \in \acc \Rightarrow \pi_0' \pi_1' \pi_2' \cdots\in \acc$ holds. For details, see~\cite{Zimmermann17}.
On the other hand, to preserve boundedness and unboundedness properties, one needs to bound the length of the $\pi_i'$ and the length of the $\pi_i$. Hence, there is only a weak aggregation for max-automata. Again, see~\cite{Zimmermann16} for details.

Given a weak  aggregation~$s$ for $\aut$ with acceptance condition~$\acc$, let 
\[
s(\acc) = \set{
s((\pi_i)_{i\in\nats}) \mid \pi_0 \pi_1 \pi_2 \cdots \in \acc \text{ is an accepting run of }\aut \text{ with } \sup\nolimits_i \size{\pi_i} < \infty}
.\]

Next, we consider aggregations that are trackable by automata. A monitor for an automaton~$\aut$ with transition function~$\delta$ is a tuple~$\mon = (M,\bot, \update)$ where $M$ is a finite set of memory elements, $\bot \notin M$ is the empty memory element, and $\update \colon M_\bot \times \delta \rightarrow M$ is an update function, where we use $M_\bot = M \cup \set{\bot}$. Note that the empty memory element~$\bot$ is only used to initialize the memory, it is not in the image of $\update$. 
We say that $\mon$ computes the function~$s_\mon \colon \delta^+ \rightarrow M$ defined by $s_\mon(t) = \update(\bot, t)$ and $s_\mon(\pi\cdot t) = \update(s_\mon(\pi), t)$ for $\pi \in \delta^+$ and $t \in \delta$. 

\begin{example}
\label{example-monitor}
Recall  Example~\ref{example-aggregation}. The strong aggregation~$s_\parity$ for a parity automaton is computed by the monitor~$(\col(Q), \bot, (c,t) \mapsto \max\set{c,\col(t)})$, where $\bot < c$ for every $c \in \col(Q)$.
\end{example}

Next, we take the product of $\aut$ and the monitor~$\mon$ for $\aut$, which simulates $\aut$ and simultaneously aggregates the acceptance condition. Formally, we define the product as $\aut \times \mon = (Q \times M_\bot, (q_\initmark, \bot), \Sigma, \delta', \emptyset)$ where $\delta'((q, m), a) = (q', \update(m, (q, a, q'))) $ for $q' = \delta(q, a)$. Note that $\aut\times\mon$ has an empty set of accepting runs, as these are irrelevant to us.

\subsection{Removing Delay via Aggregation}

Consider a play prefix in a delay game~$\delaygame{L(\aut)}$: Player~$I$ has produced a sequence~$\alpha(0) \cdots \alpha(i)$ of letters while Player~$O$ has produced $\beta(0) \cdots \beta(i')$ with, in general, $i'<i$. Now, she has to determine $\beta(i'+1)$. The automaton~$\aut \times \mon$ can process the joint sequence~${\alpha(0) \cdots \alpha(i') \choose \beta(0)\cdots \beta(i')}  $, but not the sequence~$\alpha(i'+1) \cdots \alpha(i)$, as Player~$O$ has not yet picked the letters~$\beta(i'+1) \cdots \beta(i)$. However, one can determine which states are reachable by some completion~${\alpha(i'+1) \cdots \alpha(i) \choose \beta(i'+1) \cdots \beta(i)}$ by  projecting away $\SigmaO$ from $\aut\times\mon$.

Thus, from now on assume $\Sigma = \SigmaI \times \SigmaO$ and define $\delta_P \colon 2^{Q\times M_\bot} \times \SigmaI \rightarrow \pow{Q \times M}$ via 
\[
\delta_P (S, a) = \left\{\left.\delta'\left( (q,m), {a \choose b} \right) \right| (q,m) \in S \text{ and } b \in \SigmaO\right\}.\]
Intuitively, $\delta_P$ is obtained as follows: take $\aut\times\mon$, project away $\SigmaO$, and apply the power set construction (while discarding the anyway empty acceptance condition). Then, $\delta_P$ is the transition function of the resulting deterministic automaton.
As usual, we extend $\delta_P$ to $\delta_P^+ \colon 2^{Q \times M_\bot} \times \SigmaI^+ \rightarrow 2^{Q\times M}$ via $\delta_P^+(S,a) = \delta_P(S,a)$ and $\delta_P^+(S, wa) = \delta_P( \delta_P^+(S, w), a)$. 

\begin{numberedremark}
\label{remark-powersetcharac}
The following are equivalent for $q \in Q$ and $w \in \SigmaI^+$:
\begin{enumerate}
	\item $(q',m') \in \delta_P^+(\set{(q, \bot)},w)$. 
	\item There is a $w' \in (\SigmaI \times \SigmaO)^+ $ whose projection to $\SigmaI$ is $w$ such that the run~$\pi$ of $\aut$ processing $w'$ starting from $q$ ends in $q'$ and satisfies $s_\mon(\pi) = m'$.
\end{enumerate}
\end{numberedremark}

 We use this property to define an equivalence relation formalizing the idea that words having the same behavior in $\aut \times \mon$ do not need to be distinguished. To this end, to every $w \in \SigmaI^+$ we assign the transition summary~$r_w \colon Q \rightarrow \pow{Q \times M}$ defined via $
r_w(q) = \delta_P^+(\set{(q, \bot)}, w)$.
Having the same transition summary is a finite equivalence relation~$\equiv$ over $\SigmaI^+$ whose index is bounded by $2^{\size{Q}^2\cdot\size{M}}$. For an $\equiv$-class~$S = \equivclass{w}$ define $r_S = r_w$, which is independent of representatives. Let $\R$ be the set of infinite $\equiv$-classes. 

Now, we define a Gale-Stewart game in which Player~$I$ determines an infinite sequence of equivalence classes from $\R$. By picking representatives, this induces a word~$\alpha \in \SigmaI^\omega$. Player~$O$ picks states~$(q_i,m_i)$ such that the $m_i$ aggregate a run of $\aut$ on some completion~${\alpha \choose \beta}$ of $\alpha$. Player~$O$ wins if the $m_i$ imply that the run of $\aut$ on ${\alpha \choose \beta}$ is accepting. To account for the delay, Player~$I$ is always one move ahead, which is achieved by adding a dummy move for Player~$O$ in round~$0$.

Formally, in round~$0$, Player~$I$ picks an $\equiv$-class~$S_0 \in \R$ and Player~$O$ has to pick $(q_0, m_0) = (q_\initmark, \bot)$. In round~$i>0$, first Player~$I$ picks an $\equiv$-class~$S_i \in \R$, then Player~$O$ picks a state~$(q_i, m_i) \in r_{S_{i-1}}(q_{i-1})$ of the product automaton. Player~$O$ wins the resulting play~$S_0 (q_0, m_0) S_1 (q_1, m_1) S_2 (q_2, m_2) \cdots$ if $m_1 m_2 m_3 \cdots \in s_\mon(\acc)$ (note that $m_0$ is ignored). The notions of (finite-state and winning) strategies are inherited from Gale-Stewart games, as this game is indeed such a game~$\game(L(\autb))$ for some automaton~$\autb$ of size~$\size{\R} \cdot \size{Q}\cdot\size{M} $ which can be derived from $\aut$ and $\mon$.

Formally, we define $\autb = (\R \times Q \times M_\bot, \R \times (Q \times M), (S_\initmark, q_\initmark, m_\initmark), \delta', \acc')$ for some arbitrary~$S_\initmark \in \R$, some arbitrary~$m_\initmark \in M$, $\delta'((S,q,m),{S' \choose (q',m')}) = (S',q',m')$, and 
$(S_0,q_0,m_0) (S_1,q_1,m_1) (S_2,q_2,m_2) \cdots \in \acc'$ if, and only if, 
\begin{itemize}
	\item $(q_0, m_0) = (q_\initmark,\bot)$,
	\item $(q_i,m_i) \in r_{S_{i-1}}(q_{i-1})$ for all $i >0$, and
	\item $m_1 m_2 m_3 \cdots \in s_\mon(\acc)$.
\end{itemize}
It is straightforward to prove that $\autb$ has the desired properties.

Note that, due to our very general definition of acceptance conditions, we are able to express the local consistency requirement~\myquot{$(q_i,m_i) \in r_{S_{i-1}}(q_{i-1})$} using the acceptance condition. For less general acceptance modes, e.g., parity, one has to check this property using the state space of the automaton, which leads to a polynomial blowup, as one has to store each~$S_{i-1}$ for one transition.

\begin{theorem}\label{thm-main}
Let $\aut$ be an $\omega$-automaton and let $\mon$ be a monitor for $\aut$ such that $s_\mon$ is a strong aggregation for $\aut$, let $\autb$ be constructed as above, and define $d = 2^{\size{Q}^2\cdot\size{M_\bot}}$.

\begin{enumerate}
	
	\item\label{thm-main-delay2delayfree} If Player~$O$ wins $\delaygame{L(\aut)}$ for some delay function~$f$, then she also wins $\game(L(\autb))$.
	
	\item\label{thm-main-delayfree2block} If Player~$O$ wins $\game(L(\autb))$, then she also wins the block-game~$\blockgame{d}{L(\aut)}$. Moreover, if she has a finite-state winning strategy for $\game(L(\autb))$ with $n$ states, then she has a delay-aware finite-state winning strategy for $\blockgame{d}{L(\aut)}$ with $n $ states.
	
\end{enumerate}
\end{theorem}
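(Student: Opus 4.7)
My plan is to prove both parts by translating plays between the three game types via Remark~\ref{remark-powersetcharac}, which links membership in $r_S(q)$ to completions of input words. The strong aggregation property will be the key technical tool, letting us transfer acceptance between runs with different piece lengths but the same aggregated sequence.

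For Part~\ref{thm-main-delay2delayfree}, I will take a winning $\stratO$ in $\delaygame{L(\aut)}$ and construct $\stratO'$ for $\game(L(\autb))$ by running $\stratO$ on concatenated representatives of the classes picked by Player~$I$. Given a play prefix $S_0 \cdots S_i$, I inductively pick $w_{S_j} \in S_j$ of length $L_j$ large enough that, once $w_{S_0} \cdots w_{S_i}$ has been played in the delay game, $\stratO$ has committed to every output letter aligned with $w_{S_0} \cdots w_{S_{i-1}}$; this is always possible because every class in $\R$ is infinite and hence contains words of arbitrary length. The output $\stratO'(S_0 \cdots S_i) = (q_i, m_i)$ will be the state and aggregation of the run piece on ${w_{S_{i-1}} \choose y_{i-1}}$, and consistency $(q_i, m_i) \in r_{S_{i-1}}(q_{i-1})$ is immediate from Remark~\ref{remark-powersetcharac}. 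The main obstacle will be showing the aggregated sequence $m_1 m_2 m_3 \cdots$ lies in $s_\mon(\acc)$: the delay-game run is accepting, but its pieces of length $L_j$ may be unbounded when $f$ is non-constant, so it does not directly witness membership. I will resolve this by constructing a \emph{second} run $\rho'$ of $\aut$ using minimum-length representatives of each $S_{i-1}$ together with completions yielding the same pairs $(q_i, m_i)$ (again via Remark~\ref{remark-powersetcharac}); the pieces of $\rho'$ are uniformly bounded by $\max_{S \in \R} \size{w_S^{\min}}$ and aggregate to the same sequence. Strong aggregation---not weak, since the delay-game run has unbounded pieces---then transfers acceptance from the delay-game run to $\rho'$, witnessing the membership.

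For Part~\ref{thm-main-delayfree2block}, I will simulate $\stratO'$ in $\blockgame{d}{L(\aut)}$ by reading each input block $\block{a_i}$ as the class $\equivclass{\block{a_i}}$: since $d = 2^{\size{Q}^2 \cdot \size{M_\bot}}$ exceeds the index of $\equiv$, a pigeonhole argument on the (right-congruent) $\equiv$-automaton's trajectory over any length-$d$ word yields a pumpable cycle, so every length-$d$ block lies in an infinite class in $\R$. Player~$O$'s response $\block{b_i}$ will be the $\SigmaO$-projection of a completion (via Remark~\ref{remark-powersetcharac}) realizing $\stratO'$'s output $(q_{i+1}, m_{i+1}) \in r_{\equivclass{\block{a_i}}}(q_i)$; the block-game run then decomposes into length-$d$ pieces with aggregation $m_1 m_2 m_3 \cdots \in s_\mon(\acc)$, and strong aggregation transfers acceptance. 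For the finite-state refinement, an $n$-state transducer $\strataut'$ for $\stratO'$ yields an $n$-state block-game transducer with the same state set: transitions update via $\delta(q, \block{a}) = \delta'(q, \equivclass{\block{a}})$, and the output function $\lambda(q, \block{a}, \block{a}')$ applies $\delta'$ and $\lambda'$ twice to extract $(q_{j-1}, m_{j-1})$ and $(q_j, m_j)$, then deterministically picks $\block{b}$ realizing the required transition.
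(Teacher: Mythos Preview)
The paper itself omits the proof of Theorem~\ref{thm-main}, deferring it to the full version, so there is no in-paper argument to compare against directly. That said, your proposal is correct and is precisely the argument the paper's scaffolding (transition summaries, Remark~\ref{remark-powersetcharac}, the definition of $s_\mon(\acc)$, and the strong/weak aggregation distinction) is engineered to support; in particular, your use of arbitrarily long representatives from the infinite classes in $\R$ to absorb an unbounded delay function in Part~\ref{thm-main-delay2delayfree}, followed by the switch to minimum-length representatives to produce a bounded-piece witness for membership in $s_\mon(\acc)$, is exactly where the \emph{strong} (rather than weak) aggregation hypothesis is needed and matches the paper's explicit contrast with Theorem~\ref{thm-main2}.

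Two minor remarks. First, in Part~\ref{thm-main-delayfree2block} you invoke strong aggregation to transfer acceptance to the block-game run, but since both the witnessing run (from the definition of $s_\mon(\acc)$) and the block-game run have bounded pieces, weak aggregation already suffices there---this is why Part~\ref{thm-main-delayfree2block} is stated identically in Theorem~\ref{thm-main2}. Second, your pigeonhole argument for $\equivclass{\block{a}} \in \R$ is sound: $\equiv$ is indeed a right congruence (since $r_{wa}(q) = \delta_P(r_w(q),a)$), the number of transition summaries is at most $2^{\size{Q}^2\cdot\size{M}} < d$, and the repeated summary among the $d$ nonempty prefixes of $\block{a}$ yields a pumpable infix keeping the class fixed. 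The finite-state transfer in Part~\ref{thm-main-delayfree2block} is also correct as stated; the two applications of $\delta'$ and $\lambda'$ recover $q_{i-1}$ and $(q_i,m_i)$ from the current transducer state and the two pending blocks, and Remark~\ref{remark-powersetcharac} guarantees the required completion~$\block{b_{i-1}}$ exists.
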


By applying both implications and Item~\ref{lemma-delaygamesvsblockgames-block2delay} of Lemma~\ref{lemma-delaygamesvsblockgames}, we obtain upper bounds on the complexity of determining for a given~$\aut$ whether Player~$O$ wins $\delaygame{L(\aut)}$ for some $f$ and on the necessary constant lookahead necessary to do so.

\begin{corollary}
Let $\aut$, $\autb$, and $d$ be as in Theorem~\ref{thm-main}. Then, the following are equivalent:
\begin{enumerate}
	\item Player~$O$ wins $\delaygame{L(\aut)}$ for some delay function~$f$.
	\item Player~$O$ wins $\delaygame{L(\aut)}$ for the constant delay function~$f$ with $f(0) = 2d$.
	\item Player~$O$ wins $\game(L(\autb))$.
\end{enumerate}
\end{corollary}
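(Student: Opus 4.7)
The plan is to close a cycle of implications (1) $\Rightarrow$ (3) $\Rightarrow$ (2) $\Rightarrow$ (1), which will give all three equivalences at once. All three implications are essentially one-line citations of the results already established in the excerpt, so I do not expect any significant obstacle; the corollary is intended as a summary statement that collects the consequences of Theorem~\ref{thm-main} and Lemma~\ref{lemma-delaygamesvsblockgames} in a convenient form.

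For (1) $\Rightarrow$ (3), I would invoke Item~\ref{thm-main-delay2delayfree} of Theorem~\ref{thm-main} directly: if Player~$O$ wins $\delaygame{L(\aut)}$ for \emph{some} delay function~$f$, then she wins the Gale-Stewart game~$\game(L(\autb))$. Note that the hypothesis of this item is agnostic about which delay function is used, which is exactly what we need here.

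For (3) $\Rightarrow$ (2), I would chain two results. First, by Item~\ref{thm-main-delayfree2block} of Theorem~\ref{thm-main}, a win for Player~$O$ in $\game(L(\autb))$ yields a win in the block game~$\blockgame{d}{L(\aut)}$ with $d = 2^{\size{Q}^2 \cdot \size{M_\bot}}$ as defined in the statement. Then, Item~\ref{lemma-delaygamesvsblockgames-block2delay} of Lemma~\ref{lemma-delaygamesvsblockgames} transforms this into a winning strategy for $\delaygame{L(\aut)}$ with respect to the constant delay function~$f$ given by $f(0) = 2d$, which is exactly (2).

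Finally, (2) $\Rightarrow$ (1) is trivial: (2) already exhibits a specific (constant) delay function for which Player~$O$ wins, so the existential statement in (1) holds immediately. Putting these three implications together closes the cycle and establishes the equivalence of all three conditions.
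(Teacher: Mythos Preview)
Your proposal is correct and matches the paper's approach exactly: the paper states the corollary immediately after noting that it follows \myquot{by applying both implications} of Theorem~\ref{thm-main} together with Item~\ref{lemma-delaygamesvsblockgames-block2delay} of Lemma~\ref{lemma-delaygamesvsblockgames}, which is precisely your cycle (1)\,$\Rightarrow$\,(3)\,$\Rightarrow$\,(2)\,$\Rightarrow$\,(1).
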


Thus, determining whether, given $\aut$, Player~$O$ wins $\delaygame{L(\aut)}$ for some $f$ is achieved by determining the winner of the Gale-Stewart game~$\game(L(\autb))$ and, independently, we obtain an exponential (in $\size{Q} \cdot \size{M}$) upper bound on the necessary constant lookahead.

\begin{example}
Continuing our example for the parity acceptance condition, we obtain the exponential upper bound~$2^{\size{Q}^2\cdot \size{\col(Q)}+2}$ on the constant lookahead necessary to win the delay game and an exponential-time algorithm for determining the winner, as $\autb$ has exponentially many states, but the same number of colors as $\aut$. Both upper bounds are tight~\cite{KleinZimmermann16}.
\end{example}

In case there is no strong aggregation for $\aut$, but only a weak  one, one can show that finite-state strategies exist, if Player~$O$ wins with respect to some constant delay function at all.

\begin{theorem}\label{thm-main2}

Let $\aut$ be an $\omega$-automaton and let $\mon$ be a monitor for $\aut$ such that $s_\mon$ is a weak aggregation for $\aut$, let $\autb$ be constructed as above, and define $d = 2^{\size{Q}^2\cdot\size{M_\bot}}$.
\begin{enumerate}
	
	\item\label{thm-main2-delay2delayfree} If Player~$O$ wins $\delaygame{L(\aut)}$ for some constant delay function~$f$, then she also wins $\game(L(\autb))$.
	
	\item\label{thm-main2-delayfree2block} If Player~$O$ wins $\game(L(\autb))$, then she also wins the block-game~$\blockgame{d}{L(\aut)}$. Moreover, if she has a finite-state winning strategy for $\game(L(\autb))$  with $n$ states, then she has a delay-aware finite-state winning strategy for $\blockgame{d}{L(\aut)}$ with $n $ states.
		
\end{enumerate}

\end{theorem}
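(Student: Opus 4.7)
The plan is to mirror the proof of Theorem~\ref{thm-main} closely, with one crucial adjustment: because the weak aggregation property relates runs only when \emph{both} have bounded pieces, I must ensure the relevant runs of $\aut$ on the delay-game side of the argument have bounded pieces too.

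For Part~\ref{thm-main2-delayfree2block}, I would argue essentially verbatim as in the corresponding part of Theorem~\ref{thm-main}. Given a winning strategy $\stratO$ for Player~$O$ in $\game(L(\autb))$, I would have her simulate the Gale-Stewart game inside $\blockgame{d}{L(\aut)}$: she computes the equivalence class induced by each pair of input blocks she receives, feeds this to $\stratO$ as Player~$I$'s Gale-Stewart move, receives a response $(q_i, m_i)$, and plays the output block of length $d$ that realizes the corresponding transition in $\aut$. The resulting run of $\aut$ on the block-game play has pieces of length $d$ (hence bounded) and aggregation sequence $m_1 m_2 m_3 \cdots \in s_\mon(\acc)$; since the witness run implicit in the definition of $s_\mon(\acc)$ likewise has bounded pieces, the weak aggregation property applies and the actual run is accepting. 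The transducer implementing $\stratO$ also implements the new strategy, with only the output function changed, so the state count $n$ is preserved.

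For Part~\ref{thm-main2-delay2delayfree}, the plan is to fix, for each class $S \in \R$, a representative $\rep(S) \in S$ of length at least $d$ (possible because infinite equivalence classes contain words of arbitrarily large length), and to let $K$ be the maximum of these lengths (finite since $\equiv$ has finite index). Given a winning $\stratO$ for Player~$O$ in $\delaygame{L(\aut)}$ with constant delay~$f$, $f(0)=d$, I would define $\stratO'$ in $\game(L(\autb))$ by running a simulation of the delay game: whenever Player~$I$ plays $S_i$, feed $\rep(S_i)$ letter by letter as Player~$I$'s input to $\stratO$ and record its outputs. Player~$O$'s response $(q_i, m_i)$ in Gale-Stewart round~$i$ is then read off from this simulation, namely the state of $\aut$ reached after processing $\rep(S_{i-1})$ together with the value $s_\mon$ assigns to that piece of the run. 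Because every representative has length at most $K$, the pieces of the simulated run are uniformly bounded, and because $\stratO$ is winning, the run is accepting. Hence $m_1 m_2 m_3 \cdots \in s_\mon(\acc)$ holds directly, without invoking the weak aggregation property, and $\stratO'$ is a winning strategy.

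The main obstacle, and the reason for the restriction to constant delay in Part~\ref{thm-main2-delay2delayfree}, lies in the timing of the simulation. Under constant delay, $\stratO$ produces $\beta(t)$ after seeing only $\alpha(0) \cdots \alpha(t+d-1)$, so by insisting each $\rep(S_i)$ has length at least $d$ I would guarantee that by the time Player~$I$ has played $S_0, \ldots, S_i$ in the Gale-Stewart game, $\stratO$ has already produced enough output letters to cover $\rep(S_{i-1})$, from which $(q_i, m_i)$ can be extracted. Under an unbounded delay function, no uniform bound on the representatives can keep the simulation in step with $\stratO$'s lookahead, forcing representatives of unbounded length and hence pieces of unbounded length; at that point the weak aggregation property becomes inapplicable. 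This is precisely the gap that Theorem~\ref{thm-main} could close using strong aggregations, which is why the stronger theorem accommodates arbitrary delay functions while Theorem~\ref{thm-main2} does not.
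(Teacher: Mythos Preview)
The paper defers the proofs of both Theorem~\ref{thm-main} and Theorem~\ref{thm-main2} to the full version, so there is no in-paper argument to compare against line by line. Your plan is the natural one given the construction the paper sets up, and the key conceptual point---that a weak aggregation only transfers acceptance when \emph{both} decompositions have bounded pieces, which forces representatives of bounded length in Part~\ref{thm-main2-delay2delayfree} and hence the restriction to constant~$f$---is exactly the intended distinction between the two theorems.

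Two corrections are needed, however. In Part~\ref{thm-main2-delay2delayfree} you silently set $f(0)=d$, but the hypothesis is that $f$ is an \emph{arbitrary} constant delay function; you must choose representatives of length at least $f(0)$ (whatever that is), and then take $K$ to be their maximum. Nothing else changes. In Part~\ref{thm-main2-delayfree2block}, the Gale-Stewart move~$S_i$ should be the $\equiv$-class of the \emph{single} block $\block{a_i}$, not of a pair; the two most recent blocks enter only through the output function~$\lambda$ of the block-game transducer. More importantly, you are tacitly assuming that $[\block{a_i}]_\equiv \in \R$, i.e., that every word of length~$d$ lies in an infinite $\equiv$-class. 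This is true but requires an argument: the map $w \mapsto r_w$ is computed by a deterministic automaton with at most $2^{\size{Q}^2\size{M}} < d$ states, so any $w$ of length~$d$ can be pumped within its $\equiv$-class. You should state this explicitly.
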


Again, we obtain upper bounds on the solution complexity (here, with respect to constant delay functions) and on the necessary constant lookahead.

\begin{corollary}
Let $\aut$, $\autb$, and $d$ be as in Theorem~\ref{thm-main2}. Then, the following are equivalent:
\begin{enumerate}
	\item Player~$O$ wins $\delaygame{L(\aut)}$ for some constant delay function~$f$.
	\item Player~$O$ wins $\delaygame{L(\aut)}$ for the constant delay function~$f$ with $f(0) = 2d$.
	\item Player~$O$ wins $\game(L(\autb))$.
\end{enumerate}
\end{corollary}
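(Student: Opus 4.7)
The plan is to establish the three equivalences via a cyclic chain of implications $(1) \Rightarrow (3) \Rightarrow (2) \Rightarrow (1)$, relying almost entirely on results already proved earlier in the paper. The only content is a careful bookkeeping argument that threads together Theorem~\ref{thm-main2} and Lemma~\ref{lemma-delaygamesvsblockgames}, so there is no genuinely new combinatorial step; the role of this corollary is to package those two results into a uniform statement about delay games.

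For $(1) \Rightarrow (3)$ I would invoke Item~\ref{thm-main2-delay2delayfree} of Theorem~\ref{thm-main2} directly: the hypothesis that Player~$O$ wins $\delaygame{L(\aut)}$ for \emph{some} constant delay function $f$ is exactly the hypothesis of that item, which yields a winning strategy in $\game(L(\autb))$. For $(3) \Rightarrow (2)$ I would first apply Item~\ref{thm-main2-delayfree2block} of Theorem~\ref{thm-main2} to transform the winning strategy in $\game(L(\autb))$ into a winning strategy in the block game $\blockgame{d}{L(\aut)}$, and then feed that strategy through Item~\ref{lemma-delaygamesvsblockgames-block2delay} of Lemma~\ref{lemma-delaygamesvsblockgames}, which produces a winning strategy for Player~$O$ in $\delaygame{L(\aut)}$ for the constant delay function $f$ with $f(0) = 2d$, as required by~(2). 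Finally, $(2) \Rightarrow (1)$ is immediate, because the delay function specified in~(2) is itself a constant delay function and hence witnesses~(1).

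The only subtlety I foresee, and the one place where the proof for weak aggregations differs from that for strong aggregations, is that~(1) must be restricted to \emph{constant} delay functions: Theorem~\ref{thm-main2}.\ref{thm-main2-delay2delayfree} is stated under this restriction (in contrast to Theorem~\ref{thm-main}.\ref{thm-main-delay2delayfree}, which permits arbitrary $f$), so one has to make sure not to claim the stronger equivalence that appears in the preceding corollary. This asymmetry reflects the fact that weak aggregations only preserve acceptance under the additional boundedness hypothesis $\sup_i \size{\pi_i} < \infty$, and hence cannot cope with the unbounded block lengths that an arbitrary delay function could produce. Apart from this caveat, the proof is a two-line concatenation of the results cited above.
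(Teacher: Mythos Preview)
Your argument is correct and matches the paper's approach: the corollary is stated without explicit proof, but the preceding text (``By applying both implications and Item~\ref{lemma-delaygamesvsblockgames-block2delay} of Lemma~\ref{lemma-delaygamesvsblockgames}\ldots'') indicates exactly the cyclic chain $(1)\Rightarrow(3)\Rightarrow(2)\Rightarrow(1)$ you describe, and your remark about the restriction to constant delay functions in the weak-aggregation setting is accurate.
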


\section{Discussion}
\label{sec-disc}
Let us compare the two approaches presented in the previous section with three use cases: delay games whose winning conditions are given by deterministic parity automata, by deterministic Muller automata, and by LTL formulas. All formalisms only define $\omega$-regular languages, but vary in their succinctness. 

The following facts about arena-based games will be useful for the comparison:
\begin{itemize}
	\item The 
	winner of arena-based parity games 
 has positional winning strategies~\cite{EmersonJutla91,Mostowski91}, i.e., finite-state strategies with a single state.
	\item 
	The winner of an arena-based Muller game has a finite-state strategy with $n!$ states~\cite{McNaughton93}, where $n$ is the number of vertices of the arena.
	\item 
	The winner of an arena-based LTL game has a finite-state strategy with $2^{2^{\size{\phi}}}$ states~\cite{PnueliRosner89a}, where $\phi$ is the formula specifying the winning condition.
\end{itemize}
Also, we need the following bounds on the necessary lookahead in delay games: 
\begin{itemize}
	\item In delay games whose winning conditions are given by deterministic parity automata, exponential (in the size of the automata) constant lookahead is both sufficient and in general necessary~\cite{KleinZimmermann16}.
	\item In delay games whose winning conditions are given by deterministic Muller automata, doubly-exponential (in the size of the automata) constant lookahead is sufficient. This follows from the transformation of deterministic Muller automata into deterministic parity automata of exponential size (see, e.g.,~\cite{GraedelThomasWilke02}). However, the best lower bound is the exponential one for parity automata, which are also Muller automata.

 	\item In delay games whose winning conditions are given by LTL formulas, triply-exponential (in the size of the formula) constant lookahead is both sufficient and in general necessary~\cite{KleinZimmermann16a}.

\end{itemize}

Using these facts, we obtain the following complexity results for finite-state strategies: Figure~\ref{fig-naive} shows the upper bounds on the number of states of delay-oblivious finite-state strategies for delay games and on the number of states of delay-aware finite-state strategies for block games. In all three cases, the former strategies are at least exponentially larger. This illustrates the advantage of decoupling tracking the history from managing the lookahead. 

\begin{figure}[h]
\centering
\begin{tabular}{llll}
 &  parity  &  Muller & LTL \\
\toprule
\textbf{delay-oblivious}  & doubly-exp. & 
quadruply-exp.&
quadruply-exp. 
\\

\midrule
\textbf{delay-aware}   & exp.  & 
doubly-exp.&
triply-exp. 

\end{tabular}	
\caption{Memory size for delay-oblivious strategies (for delay games) and delay-aware finite-state strategies (for block games), measured in the size of the representation of the winning condition. For the sake of readability, we only present the orders of magnitude, but not exact values.}
\label{fig-naive}
\end{figure}


Finally, let us compare our approach to that of Salzmann. Fix a delay game~$\delaygame{L(\aut)}$ and assume Player~$I$ has picked $\alpha(0) \cdots \alpha(i)$ while Player~$O$ has picked $\beta(0) \cdots \beta(i')$ with $i'<i$. His strategies are similar to our delay-aware ones for block games. The main technical difference is that his strategies have access to the state reached by $\aut$ when processing ${\alpha(0) \cdots \alpha(i')  \choose \beta(0) \cdots \beta(i')}$. Thus, his strategies explicitly depend on the specification automaton~$\aut$ while ours are independent of it. In general, his strategies are therefore smaller than ours, as our transducers have to simulate~$\aut$ if they need access to the current state. On the other hand, our aggregation-based framework is more general and readily applicable to quantitative winning conditions as well, while he only presents results for selected qualitative conditions like parity, weak parity, and Muller.

\section{Conclusion}
\label{sec-conc}
We have presented a very general framework for analyzing delay games. If the automaton recognizing the winning condition satisfies a certain aggregation property, our framework yields upper bounds on the necessary lookahead to win the game, an algorithm for determining the winner (under some additional assumptions on the acceptance condition), and finite-state winning strategies for Player~$O$, if she wins the game at all. These results cover all previous results on the first two aspects (although not necessarily with optimal complexity of determining the winner). 

Thereby, we have lifted another important aspect of the theory of infinite games to the setting with delay. However, many challenging open questions remain, e.g., a systematic study of memory requirements in delay games is now possible. For delay-free games, tight upper and lower bounds on these requirements are known for almost all winning conditions.

Another exciting question concerns the tradeoff between memory and amount of lookahead: can one trade memory for lookahead? In other settings, such tradeoffs exist, e.g., lookahead allows Player~$O$ to improve the quality of her strategies~\cite{Zimmermann17}. Salzmann has presented some tradeoffs between memory and lookahead, e.g., linear lookahead allows exponential reductions in memory size in comparison to delay-free strategies~\cite{Salzmann15}. In current work, we investigate whether these results are inherent to his setting, which differs subtly from the one proposed here, or whether they exist in our setting as well. 

Finite-state strategies in arena-based games are typically computed by game reductions, which turn a game with a complex winning condition into one in a larger arena with a simpler winning condition. In future work, we plan to lift this approach to delay games. Note that the algorithm for computing finite-state strategies presented here can already be understood as a reduction, as we turn a delay game into a Gale-Stewart game. This removes the delay, but preserves the type of winning condition. However, it is also conceivable that staying in the realm of delay games yields better results, i.e., by keeping the delay while simplifying the winning condition. In future work, we address this question.

In our study here we focussed on the state complexity of the automata implementing the strategies, i.e., we measure the quality of a strategy in the number of states of a transducer implementing it. However, this is not the true size of such a machine, as we have ignored the need to represent the transition function and the output function, which have an exponential domain (in the block size) in the case of delay-aware strategies. Thus, when represented as lookup tables, they are prohibitively large. However, our delay-removing reduction hints at these functions also being implementable by transducers. For the transition function this is straightforward; in current work, we investigate whether this is also possible for the output function.

Finally, in future work we will determine the complexity of computing finite-state strategies in delay games and investigate notions of finite-state strategies for Player~$I$, which should be much simpler since he does not have to deal with the lookahead. 

\paragraph{Acknowledgements} The author is very grateful to the anonymous reviewers whose feedback significantly improved the exposition. 

\bibliographystyle{eptcs}
\bibliography{biblio}


\end{document}